\DeclareMathOperator{\tr}{tr}
\newcommand{\ot}[0]{\otimes}
\newcommand{\one}[0]{\mathds{1}}
\renewcommand{\a}{\alpha}
\newcommand{\C}{\mathds{C}}
\newcommand{\Q}{\mathbb{Q}}
\newcommand{\PP}{\mathcal{P}}
\newtheorem{theorem}{Theorem}
\newtheorem*{theorem*}{Theorem}
\newtheorem{proposition}[theorem]{Proposition}
\newtheorem{lemma}[theorem]{Lemma}
\newtheorem{corollary}[theorem]{Corollary}
\newtheorem{observation}[theorem]{Observation}
\newtheorem{example}[theorem]{Example}
\newtheorem*{problem*}{Problem}
\newtheorem*{question*}{Question}
\newtheorem*{result*}{Result}
\newcommand{\nn}{\nonumber}
\begin{document}
\title
[Quasi-Clifford to qubit mappings]
{Quasi-Clifford to qubit mappings}
\date{\today}

\author{Felix Huber}

\address{
Division of Quantum Computing,
Faculty of Mathematics, Physics and Informatics,
University of Gdańsk,
Wita Stwosza 57, 80-308 Gdańsk, Poland
}
\email{felix.huber@ug.edu.pl}

\thanks{
FH thanks
Claudio Procesi for careful explanations regarding Ref.~\cite{gastineau-hills_1982};
Paweł Cieśliński,
Tomáss Crosta,
and Michał Studziński
for discussions;
and
Alexssandre de Oliveira Junior,
Chau Nguyen,
Jędrzej Stempin,
and Nikolai Wyderka
for feedback on an early version of this article.
This research was funded in whole or in part by National Science Centre, Poland 2024/54/E/ST2/00451.
For the purpose of Open Access, the author has applied a CC-BY public copyright licence to any
Author Accepted Manuscript (AAM) version arising from this submission.
}

\begin{abstract}
Algebras with given (anti-)commutativity structure
are widespread in quantum mechanics. This structure is captured by quasi-Clifford algebras (QCA):
a QCA generated by $\alpha_1, \dots, \alpha_n$ is
is given by the relations  $\alpha_i^2 =  k_i$ and $\alpha_j \alpha_i = (-1)^{\chi_{ij}} \alpha_i \alpha_j$,
where $k_i \in \mathds{C}$ and $\chi_{ij} \in \{0, 1\}$.
We present a mapping from QCA to Pauli algebras
and discuss its use in quantum information and computation.
The mapping also provides a Wedderburn decomposition of matrix groups with quasi-Clifford structure.
This provides a block-diagonalization for e.g. Pauli groups,
while for Majorana operators the Jordan-Wigner transform is recovered.
Applications to the symmetry reduction of semidefinite programs and
for constructing maximal anti-commuting subsets are discussed.
\end{abstract}


\maketitle
\setcounter{tocdepth}{1}

Algebras with a given (anti-)commutativity structure
are widespread in quantum many-body physics and quantum computation.
Examples are found in fermion-to-qubit mappings for
electronic structure calculations~\cite{PRXQuantum.4.030314},
fermionic quantum computation~\cite{BRAVYI2002210},
the simulation of quantum circuits~\cite{PhysRevA.70.052328, Miller_2024},
the diagonalization of Hamiltonians~\cite{
PhysRevA.107.062416, PhysRevA.109.022618},
and
the construction of entanglement-assisted quantum codes~\cite{
PhysRevA.77.064302, PhysRevA.79.062322}.
In these applications,
one typically makes use of such algebras in their block-diagonalized form
to simplify calculations.
The aim of this paper is to provide a complementary viewpoint on these results
by linking them to the structural
decomposition of quasi-Clifford algebras (QCA).
In particular we want to answer the question: How can one realize a given anti-commutativity structure in terms of Pauli operators?

QCAs encode the (anti-) commutativity relations of a group or algebra, while ignoring any
possible additional structure~\cite{gastineau-hills_1982}.
A quasi-Clifford algebra generated by $\a_1, \dots, \a_n$
is given by the relations  $\a_i^2 = k_i$ and $\a_j \a_i = (-1)^{\chi_{ij}} \a_i \a_j$,
where $k_i \in \C$ and $\chi_{ij} = \chi_{ji} \in \{0, 1\}$~\footnote{
More generally, a QCA over a field $F$ can have $k_i \in F$.
We will mostly deal with {\it special}
quasi-Clifford algebras for which $\a_i^2 = k_i \in \{1, -1\}$~\cite{gastineau-hills_1982}.
}.
In applications, $\chi_{ij}$ is often given as the adjacency matrix of an anti-commutativity graph.
Examples of algebras with such structure
are subgroups of the $n$-qubit Pauli group and Majorana operators,
describing multi-qubit and multi-fermion systems respectively.
The quasi-Clifford structure of a given algebra keeps track of its (anti-) commutativity relations
while ignoring additional group or algebra structure.
That such approach can be useful is justified by the fact that (anti-) commutativity encodes
key features of many problems in quantum physics, e.g. for
ground state energies~\cite{
10.1145/3519935.3519960,
cbqf-d24r,
hastings2024limitationsseparationsquantumsumofsquares,
cbqf-d24r},
uncertainty relations~\cite{PhysRevA.107.062211, PhysRevLett.132.200202}, numerical ranges~\cite{PRXQuantum.5.020318},
and state tomography~\cite{PRXQuantum.6.010336}.

Gastineau-Hills has shown that over the field of complex numbers $\C$ every
QCA decomposes into a direct sum of tensor products of two-dimensional
representations~\cite{gastineau-hills_1982}.
We take this approach to construct quasi-Clifford to qubit mappings,
so to realize any given anti-commutativity structure on multi-qubit systems.
This shows that previously introduced Pauli algebra decompositions also apply to QCAs~\cite{PhysRevA.79.062322}.
The method also recovers the Jordan-Wigner transformation mapping Majorana operators to qubits,
providing a pleasing link to condensed matter physics.
Finally, we show applications to the symmetry-reduction of semidefinite programs
and for constructing maximal anti-commuting subsets of Pauli groups.

\begin{figure}[tbp]
 \includegraphics[height = 10em]{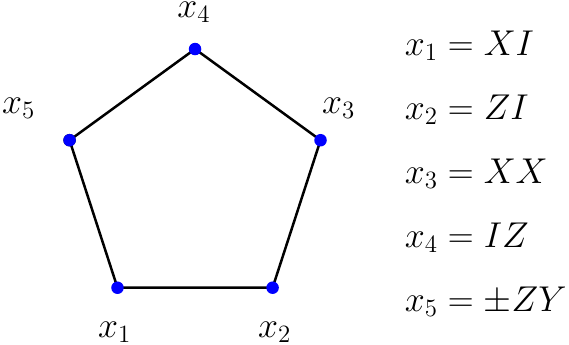}
 \caption{\label{fig:pentagon}
 A quasi-Clifford to qubit mapping that realizes the anti-commutation relations given by the pentagon:
 $x_i x_j = - x_j x_i$ if $i\sim j$ and
 $x_i x_j = x_j x_i$ if $i \not \sim j$.
 The two irreducible representations of this quasi-Clifford algebra are labeled by the sign of $x_5 = \pm ZY$.
 }
\end{figure}

\smallskip
\noindent{\bf Notation.}
Let $X_\ell, Y_\ell, Z_\ell$ be Pauli matrices that act on coordinate $\ell$ by $X, Y,Z$ respectively
and by the identity matrix $\one$ on the remaining coordinates.
$I$ is the $2\times 2$ identity matrix.
We write tensor products $I \otimes X\otimes Y \otimes Z$ as $IXYZ$,
termed {\em Pauli strings}.
The group generated by Pauli strings of length~$n$ is the Pauli group~$\PP_n$.
We write $g_i \sim g_j$ if two Pauli strings anti-commute,
that is if $\{g_i, g_j\} = g_ig_j +  g_jg_i = 0$.
If $g_i$ and $g_j$ commute then $g_i \not \sim g_j$.
The anti-commutativity or frustration graph of a set of generators $\{g_i\}_{i=1}^m$
has edges $i\sim j$ if $g_i \sim g_j$ anti-commute.
Finally we denote by $G = \langle g_1,  \dots, g_m\rangle$
the group generated by $g_1, \dots, g_m$
and by $A = [x_1, \dots, x_m]$ the algebra generated by $x_1, \dots, x_m$.

\smallskip
\noindent{\bf Related works and Outlook.}
Refs.~\cite{PhysRevA.70.052328, PhysRevA.79.062322} contain perhaps the core of these types of decompositions,
making use of the Heisenberg representation and transformations that preserve the
symplectic inner product.
Refs.~\cite{PhysRevA.107.062416, PhysRevA.109.022618}
perform a similar reduction by block-diagonalizing the commutation matrix while keeping the symplectic product invariant.
Ref.~\cite{Miller_2024} optimizes quantum circuits to diagonalize a given set of commuting Pauli strings and
Ref.~\cite{vandenBerg2020circuitoptimization} reduces circuit complexity based on maximal commuting subsets.
Ref.~\cite{aguilar2024classificationpauliliealgebras, Wiersema2024} decompose Pauli Lie groups,
also making use of the quasi-Clifford Lie theory~\cite{
Gintz2018,
khovanova2008Cliffordalgebrasgraphs,
cuypers2021quasiCliffordalgebrasquadraticforms}.
Refs.~\cite{Sarkar2024quditpauligroupnon, makuta2025frustrationgraphformalismqudit} provide decompositions for qudit Pauli groups.
Ref.~\cite{Leopardi2019} contains a good overview on the QC algebra literature.
Many quantum computers have limited connectivity,
so it is interesting to design quasi-Clifford to qubit mappings
that are resource-efficient, for example that yield geometrically local and low weight observables.
In spirit of Ref.~\cite{Miller_2024}, it is interesting to ask what are mappings that are
sufficiently local or whose corresponding observables satisfy a given hardware connectivity structure.

\section{Quasi-Clifford algebras}
A quasi-Clifford algebra of $m$ generators $\a_1, \dots, \a_m$ over a field $F$ is given by relations
\begin{align}
 \a_i^2 = k_i \,,
 \quad \quad \a_j \a_i = (-1)^{\chi_{ij}} \a_i \a_j\,,
 \quad \quad \chi_{ij}=\chi_{ji} \in \{0,1\}\,,
\end{align}
with $k_i \in F$.
In what follows we choose $F=\C$ the field of complex numbers and $k_i \in \C$.
Denote by $\mathbb{C}_{b}$ be the quasi-Clifford algebra generated by $\beta$
with $\beta^2 = b$.
Denote by $\mathbb{Q}_{c,d}$ be the quasi-Clifford algebra generated by $\gamma$ and $\delta$
with $\gamma^2= c$,
$\delta^2 = d$,
and $\gamma\delta = -\delta \gamma$.
Write $[x_1, \dots, x_m]$ for the algebra generated by $x_1, \dots, x_m$ over $\C$.
Then $\mathbb{C}_{b} = [\beta]$ and $\mathbb{Q}_{c,d} = [\gamma,\delta]$.
Finally, let $G$ be the anti-commutativity graph of $\a_1, \dots, \a_m$,
with $i$ connected to $j$ (written $i\sim j$) if $\a_j \a_i = - \a_i \a_j$ and $i \not \sim j$ otherwise.

Any finite-dimensional semi-simple algebra
decomposes into a direct sum of finitely many matrix rings,
this is known as the Wedderburn decomposition.
The following is known about the structure of QC algebras:
\begin{theorem}[Structure of quasi-Clifford Algebras, {~\cite[Theorem 2.7]{gastineau-hills_1982}}]
\label{thm:wedder}
The Wedderburn structure of a quasi-Clifford algebra is
\begin{align}\label{eq:QCA_wedder}
\mathcal{C} &\simeq \mathbb{C}_{b_1} \ot \dots \ot \mathbb{C}_{b_r} \ot \Q_{c_1, d_1} \ot \dots \ot \Q_{c_s, d_s} \nn\\
            &= [\beta_1] \ot \dots \ot [\beta_r] \ot [\gamma_1, \delta_1] \ot \dots \ot [\gamma_s, \delta_s]\,,
\end{align}
where $r,s \geq 0$, $r+2s = m$,
and each $b_i,c_i, d_i$ is plus minus the product of some $k_i$'s.
The center is $[\beta_1] \ot \dots \ot [\beta_r]$ and has dimension $2^r$.
\end{theorem}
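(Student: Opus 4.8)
The plan is to reduce the general quasi-Clifford algebra $\mathcal{C}$ to a tensor product of the two building blocks $\mathbb{C}_b$ and $\mathbb{Q}_{c,d}$ by manipulating the anti-commutativity graph $G$ into a normal form, and then to read off the centre. First I would observe that the data of $\mathcal{C}$ beyond the scalars $k_i$ is entirely the $\F_2$-symmetric bilinear form $\chi$ on $\F_2^m$ recording which generators anti-commute; the generators themselves span a space on which $\chi$ acts, and the multiplicative structure is the twisted group algebra of $(\Z/2)^m$ with cocycle determined by $\chi$ and the $k_i$. The key algebraic move is a change of generators: if $\a_i, \a_j$ both anti-commute with some $\a_\ell$ (i.e. $\chi_{i\ell} = \chi_{j\ell} = 1$), then replacing $\a_i$ by $\a_i \a_j$ changes its anti-commutation pattern by adding row $j$ of $\chi$ to row $i$ (mod $2$) and rescales $\a_i^2$ by $\pm k_j$. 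This is exactly a symplectic-type row/column operation on $\chi$, so the classification reduces to bringing the symmetric bilinear form $\chi$ over $\F_2$ into a standard decomposition.

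Next I would invoke (or reprove) the normal form for symmetric bilinear forms over $\F_2$: the space $\F_2^m$ with form $\chi$ decomposes orthogonally into a radical (the vectors $v$ with $\chi(v,\cdot) = 0$) of some dimension $r$, together with $s$ hyperbolic planes, so $m = r + 2s$. Translating back: each basis vector of the radical gives a generator $\beta_i$ that commutes with everything, hence generates a copy of $\mathbb{C}_{b_i} = [\beta_i]$ with $b_i = \beta_i^2$ equal (up to sign, from the reordering above) to a product of some of the original $k_i$'s. Each hyperbolic plane gives a pair $\gamma_i, \delta_i$ with $\gamma_i \delta_i = -\delta_i \gamma_i$, generating $\Q_{c_i, d_i} = [\gamma_i, \delta_i]$. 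Because distinct blocks commute elementwise and together generate $\mathcal{C}$, the multiplication map gives a surjection $\bigotimes_i \mathbb{C}_{b_i} \ot \bigotimes_i \Q_{c_i, d_i} \to \mathcal{C}$; a dimension count (both sides have $\C$-dimension $2^m$, since the twisted group algebra of $(\Z/2)^m$ always has dimension $2^m$ and each block has dimension $2$ or $4$ with $2^r 4^s = 2^m$) forces it to be an isomorphism. The Wedderburn (semisimple) claim then follows because $\mathbb{C}_b$ is either $\C \oplus \C$ (if $b \neq 0$) or — in the special case $k_i \in \{1,-1\}$ we work with — always $\C \oplus \C$, while $\Q_{c,d}$ with $c, d$ nonzero is $M_2(\C)$; tensor products of semisimple $\C$-algebras are semisimple, giving the direct-sum-of-matrix-rings form.

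For the centre: in a tensor product the centre is the tensor product of the centres. Each $\Q_{c_i,d_i} \cong M_2(\C)$ has centre $\C$, contributing nothing, while each $\mathbb{C}_{b_i} = [\beta_i]$ is commutative of dimension $2$ and equals its own centre. Hence the centre of $\mathcal{C}$ is $[\beta_1] \ot \cdots \ot [\beta_r]$, of dimension $2^r$. The main obstacle I anticipate is the bookkeeping in the reduction step: one must check that the symplectic row operations on $\chi$ can always be realized by genuine invertible changes of generators inside $\mathcal{C}$ (products of the $\a_i$ are units since $\a_i^2 = k_i$ is a nonzero scalar in the special case), and that the induced rescalings of $\a_i^2$ compose correctly so that the final $b_i, c_i, d_i$ are indeed $\pm$ a product of the $k_i$'s; tracking these signs through the $\F_2$-Gram–Schmidt process is the delicate part, whereas the semisimplicity and centre computations are then routine.
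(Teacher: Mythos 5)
Your argument is correct and is essentially the same route as the source: the paper cites Gastineau-Hills for this theorem and realizes it constructively via the splitting algorithm of Observation~\ref{prop:split}, which is exactly your reduction of the alternating form $\chi$ over $\F_2$ to radical-plus-hyperbolic-planes, carried out one hyperbolic pair at a time by the generator substitutions $x_i \mapsto x_1 x_i,\, x_2 x_i,\, x_1 x_2 x_i$. The only point to keep explicit is that $\dim_\C \mathcal{C} = 2^m$ (i.e.\ the square-free monomials are linearly independent, which your twisted-group-algebra identification supplies), since the surjectivity-plus-dimension-count step relies on it.
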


\smallskip
Let us consider the case where all elements $\alpha_i$ square to plus or minus one, $\a_i^2 = \pm 1$.
For $F = \C$, one has the following irreducible representations
of the structural constituents appearing in Theorem~\ref{thm:wedder}~\cite{gastineau-hills_1982},
\begin{align}\label{eq:reps}
 \mathbb{C}_1 = [\beta]: \quad\quad     &\beta \to 1 \quad \text{or} \quad \beta  \to -1\,, \nn\\
 \Q_{\pm 1,1} = [\gamma,\delta] : \quad\quad &\gamma \to \begin{pmatrix}
                            0 & \pm 1 \\
                            1 & \phantom{-}0
                           \end{pmatrix}\,, \quad
                \delta \to \begin{pmatrix}
                            1 & \phantom{-}0 \\
                            0 & -1
                           \end{pmatrix}\,, \nn\\
 \Q_{1,-1} = [\gamma,\delta] : \quad\quad &\gamma \to \begin{pmatrix}
                            0 & \phantom{-}1 \\
                            1 & \phantom{-}0
                           \end{pmatrix}\,, \quad
                \delta \to \begin{pmatrix}
                            0 & -1 \\
                            1 & \phantom{-}0
                           \end{pmatrix}\,, \nn\\
 \Q_{-1,-1} = [\gamma,\delta] : \quad\quad &\gamma \to \begin{pmatrix}
                            0 & -1 \\
                            1 & \phantom{-}0
                           \end{pmatrix}\,, \,\quad
                \delta \to \begin{pmatrix}
                            i & \phantom{-}0 \\
                            0 & -i
                           \end{pmatrix}\,.
\end{align}

Note that over $F = \mathds{C}$, the generators of
$\Q_{\pm 1,1}$, $\Q_{1,-1}$, and $\Q_{-1,-1}$
span the same algebra.
Then $[\gamma, \delta]$ spans the matrix space of a single qubit,
for which $\{\one,X,Y,Z\}$ is a basis.
As a consequence of Theorem~\ref{thm:wedder},
a QCA generated by $m$ Pauli strings
can always be faithfully represented on a
$s$-qubit Pauli algebra tensored with $r$ classical binary systems such that $r+2s=m$.

\smallskip
A constructive decomposition realizing Theorem~\ref{thm:wedder} is the following.
\begin{observation}[Splitting algorithm, {\cite[page 7]{gastineau-hills_1982}}]
\label{prop:split}
Let $[x_1, \dots, x_m]$ form a quasi-Clifford algebra.
The following provides a decomposition of quasi-Clifford algebras in terms of the components
$[\beta_i]$ and
$[\gamma_i, \delta_i]$.
Suppose that $x_1, x_2$ anti-commute, and set $y_1 = x_1, y_2 = x_2$.
Then for $i=3,\dots, m$ set
\begin{align}\label{eq:split_step}
 y_i &= \begin{cases}
        x_i &     \text{if $x_i$ commutes with $x_1$ and $x_2$}\,, \\
        x_1 x_i & \text{if $x_i$ commutes with $x_1$, does not commute with $x_2$}\,, \\
        x_2 x_i & \text{if $x_i$ does not commute with $x_1$, commutes with $x_2$}\,, \\
        x_1 x_2 x_i & \text{if $x_i$ does not commute with $x_1$ and $x_2$}\,,
       \end{cases}
\end{align}
and $k_i = y_i^2$.
This decouples (splits) the pair of vertices $y_1 \sim y_2$,
so that they commute with all remaining ones.
It also induces new relations on the remaining vertices.
The resulting graph is
$G = (1,2) \cup G_{m-2}$, where
$G_{m-2}$ is determined by the (new) commutation relations among the $y_3, \dots, y_m$.
An iterated application of this transformation
to the remaining graph decomposes an arbitrary graph into
$r$ isolated vertices and $s$ pairs of vertices connected by an edge,
so that $r+2s = m$.
Then the center of the QCA is composed of the isolated vertices $\mathbb{C}_b = [\beta]$,
while each connected pair forms a $\Q_{c,d} = [\gamma, \delta]$-algebra.
\end{observation}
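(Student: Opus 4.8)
The plan is to check that the substitution~\eqref{eq:split_step} replaces the quasi-Clifford generators $x_1, \dots, x_m$ by new quasi-Clifford generators $y_1, \dots, y_m$ whose anti-commutativity graph is exactly $(1,2) \sqcup G_{m-2}$, and then to deduce the algebra decomposition from the fact that the two resulting blocks commute elementwise. The first point to record is that each $y_i$ is a monomial in the elements $x_1, x_2, x_i$, which pairwise either commute or anti-commute; hence $y_i^2 = \pm\, k_1^{\varepsilon_1} k_2^{\varepsilon_2} k_i$ for suitable $\varepsilon_1,\varepsilon_2 \in \{0,1\}$, so $k_i' := y_i^2$ is a well-defined scalar. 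For $i,j \geq 3$ the monomials $y_i, y_j$ are built from $x_1, x_2, x_i, x_j$, still pairwise commuting or anti-commuting, so $y_j y_i = (-1)^{\chi_{ij}'} y_i y_j$ with $\chi_{ij}' = \chi_{ji}' \in \{0,1\}$; concretely $\chi_{ij}'$ is an $\F_2$-linear expression in the old $\chi$'s, but only its membership in $\{0,1\}$ is needed. Thus $y_3, \dots, y_m$ again generate a quasi-Clifford algebra, with anti-commutativity graph $G_{m-2}$.

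Next I would verify, via the four-way case distinction in~\eqref{eq:split_step}, that $y_1 = x_1$ and $y_2 = x_2$ commute with every $y_i$, $i \geq 3$. The rule is designed precisely for this: $y_i$ is $x_i$ multiplied by the product of those of $x_1, x_2$ with which $x_i$ fails to commute, so that moving $x_1$ (resp.\ $x_2$) past $y_i$ produces an even number of sign flips. For instance, if $x_i$ commutes with $x_1$ but not with $x_2$, then $y_i = x_1 x_i$, and $x_2(x_1 x_i) = -(x_1 x_2)x_i = -x_1(x_2 x_i) = x_1 x_i x_2 = (x_1 x_i)x_2$, while $x_1(x_1 x_i) = k_1 x_i = (x_1 x_i)x_1$; the other three cases are identical one-line computations. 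Since also $y_1 y_2 = -y_2 y_1$, the new anti-commutativity graph is exactly $(1,2) \sqcup G_{m-2}$.

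It then remains to turn this graph statement into the algebra statement. Because $y_1, y_2$ commute elementwise with $y_3, \dots, y_m$, the multiplication map $[y_1,y_2] \ot [y_3,\dots,y_m] \to [x_1,\dots,x_m]$, $u \ot v \mapsto uv$, is a well-defined surjective algebra homomorphism; restricting to special quasi-Clifford algebras the generators are invertible ($y_i^{-1} = (k_i')^{-1} y_i$), the substitution~\eqref{eq:split_step} is invertible, so $[y_1,\dots,y_m] = [x_1,\dots,x_m]$, of dimension $2^m$, which matches $\dim[y_1,y_2]\cdot\dim[y_3,\dots,y_m] = 2^2 \cdot 2^{m-2}$ (e.g.\ by Theorem~\ref{thm:wedder}, or from the monomial basis), so the homomorphism is an isomorphism $[x_1,\dots,x_m] \simeq [y_1,y_2] \ot [y_3,\dots,y_m]$. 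Here $[y_1,y_2]$ with $y_1 \sim y_2$, $y_1^2 = c$, $y_2^2 = d$ is by definition a copy of $\Q_{c,d}$. Iterating on $G_{m-2}$ terminates after finitely many steps with $r$ isolated vertices---each a one-generator algebra $\mathbb{C}_b = [\beta]$, together forming the center---and $s$ disjoint edges---each a $\Q_{c,d} = [\gamma,\delta]$---with $r + 2s = m$, which is exactly the form asserted in Theorem~\ref{thm:wedder}.

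The bulk of the work, and essentially the only place where care is needed, is the second paragraph: confirming in all four cases and for both $y_1$ and $y_2$ that the substitution decouples the pair, and bookkeeping the induced relations on $G_{m-2}$. The one conceptual subtlety sits in the last paragraph, namely that the factorization is a genuine isomorphism rather than merely a surjection; this is where invertibility of the generators---equivalently, the restriction to special quasi-Clifford algebras, or semisimplicity so that $\mathbb{C}_b$ and $\Q_{c,d}$ are $\C \times \C$ and $M_2(\C)$---enters.
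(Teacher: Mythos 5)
Your proof is correct and takes essentially the same route as the paper, which defers the verification to Gastineau-Hills and records only the induced relations of your second paragraph in Appendix~\ref{app:rel_after_split}: the four case checks, the elementwise-commutation argument, and the dimension count $2^2\cdot 2^{m-2}=2^m$ upgrading the surjection to an isomorphism all go through. The only (immaterial) slip is that the new exponents $\chi_{ij}'$ are quadratic rather than linear in the old adjacency data, but as you note only their membership in $\{0,1\}$ is used.
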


The relations after a splitting step are given in Appendix~\ref{app:rel_after_split}.
Importantly, note that after each step the $y_1, \dots, y_m$
generate the same algebra as the original $x_1, \dots, x_m$.

\begin{figure}[btp]
 \includegraphics[height = 10em]{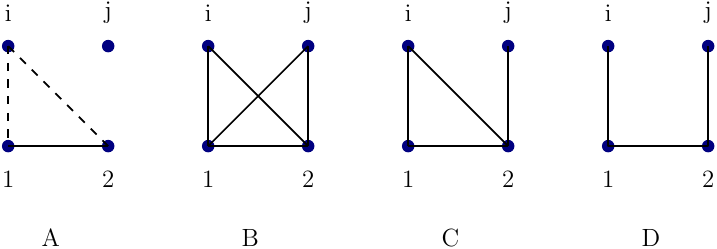}
 \caption{\label{fig:split_situations}
 Four types of edge configurations before the split. Dashed lines represent possible edges,
 while edges between $i,j$ are not drawn.\\
 {\bf A, B:} Splitting vertices $\{1,2\}$, the commutation relations between $i$ and $j$ remain.\\
 {\bf C, D:} The commutation relations between $i$ and $j$ invert.
 }
\end{figure}

\section{Quasi-Clifford to qubit mappings}

Let us now answer the question: given a (anti-)commutativity structure for a set operators,
how can one realize it in terms of Pauli strings?
We start with an example, using the algorithm in Observation~\ref{prop:split}.
\begin{example}[Pentagon] \label{eq:pentagon1}
Consider a quasi-Clifford algebra corresponding to the pentagon,
with edges given by $\{ \{1,2\}, \{2,3\}, \{3,4\}, \{4,5\}, \{1,5\}\}$.
Splitting the connected vertices $1$ and $2$ from the graph with the algorithm in Observation~\ref{prop:split} gives
\begin{align}
 y_1 = x_1 \,,\quad
 y_2 = x_2 \,,\quad
 y_3 = x_1 x_3 \,,\quad
 y_4 = x_4 \,,\quad
 y_5 = x_2 x_5\,,
\end{align}
with the anti-commutativity graph $\{1,2\}\cup \{\{3,4\}, \{4,5\}, \{3,5\} \}$ remaining.
Splitting the connected vertices $y_4, y_5$ gives
\begin{align}
 z_1 = x_1\,,\quad
 z_2 = x_2\,,\quad
 z_3 = x_1 x_3\,,\quad
 z_4 = x_4\,,\quad
 z_5 = x_1 x_3 x_4 x_2 x_5 = x_1 x_2 x_3 x_4 x_5\,.
\end{align}
The algebra then decomposes as $[z_1,z_2] \ot [z_3,z_4] \ot [z_5]$,
with $z_5$ the central element.

To obtain a representation realizing the anti-commutativity relations we
make use of Eq.~\eqref{eq:reps} and identify,
\begin{align}
z_1 = XI\,, \quad
z_2 = ZI\,, \quad
z_3 = IX\,, \quad
z_4 = IZ\,, \quad
z_5 = \pm iII\,.
\end{align}
We chose the last element such that it satisfies $z_5^2 = (x_1 x_3 x_4 x_2 x_5)^2 = -1$.
Because $z_5$ is a central element, it must act as a scalar on each irreducible component, and thus $z_5 = \pm i II$.
Inverting the relations yields a realization of the anti-commutativity relations of the pentagon as Pauli strings,
\begin{align}\label{eq:penta_to_qubit}
x_1 &= z_1 = XI\,, \quad\quad
x_2 = z_2 = ZI\,, \quad\quad
x_3 = z_1 z_3            = XX\,, \nn\\
x_4 &= z_4 = IZ\,, \quad\quad
x_5 = z_4 z_3 z_2 z_5 = \pm ZY\,.
\end{align}
This is illustrated in Fig.~\ref{fig:pentagon}.
\end{example}

Naturally, Eq.~\eqref{eq:penta_to_qubit} is not a unique assignment;
in general other quasi-Clifford to Pauli mappings are possible by performing
a different sequences of splits.
But from the example one sees another feature:
if one does not require a mapping to Pauli strings that are indepedent (as generators of a group),
then one gets away with a smaller representation than given by~Theorem~\ref{thm:wedder},
namely it is enough to only consider the $[\gamma,\delta]$-algebras.

In contrast, if one requires the generators to be independent after the mapping,
then every $[\beta]$-algebra can be represented by a Pauli $Z$ acting on an additional tensor factor.
In this case,
\begin{align}
x_1 &= z_1 = IXI\,, \quad\quad
x_2 = z_2 = IZI\,, \quad\quad
x_3 = z_1 z_3            = IXX\,, \nn\\
x_4 &= z_4 = IIZ\,, \quad\quad
x_5 = z_4 z_3 z_2 z_5 = ZZY\,.
\end{align}
preserves the linear independence of the generators. The two matrix blocks induced by the additional Pauli $Z$
corresponds to the two signs in Eq.~\eqref{eq:penta_to_qubit}.

\begin{corollary}[Quasi-Clifford to qubit mapping]
 A quasi-Clifford to qubit mapping can be obtained by
 applying the splitting algorithm in Observation~\ref{prop:split}.
Then identify each $[\gamma, \delta]$-algebra with
the one-qubit Pauli algebra $[X, Z]$.
To preserve independence of the generators, identify every $[\beta]$-algebra by $[Z]$;
identify $[\beta]$ by $[1]$ if independence does not need to be preserved.
A full quasi-Clifford to qubit mapping is given by the tensor product representation in Observation~\ref{thm:wedder}, inverting the relations to the original variables $x_1, \dots, x_n$, and removing any complex phases $\pm i$ so to make all variables hermitian.
\end{corollary}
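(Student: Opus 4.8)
The plan is to assemble the statement from three facts already in hand: the Wedderburn structure of a QCA (Theorem~\ref{thm:wedder}), the constructive splitting procedure (Observation~\ref{prop:split}), and the explicit two-dimensional representations of the blocks in Eq.~\eqref{eq:reps}. First I would run the splitting algorithm of Eq.~\eqref{eq:split_step} to completion. By Observation~\ref{prop:split} the final generators $z_1,\dots,z_m$ generate the same algebra as $x_1,\dots,x_m$, their anti-commutativity graph is a disjoint union of $r$ isolated vertices and $s$ edges with $r+2s=m$, and the algebra factorizes as $[\beta_1]\ot\cdots\ot[\beta_r]\ot[\gamma_1,\delta_1]\ot\cdots\ot[\gamma_s,\delta_s]$ as in Eq.~\eqref{eq:QCA_wedder}. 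I would also record that each splitting step replaces $x_i$ by $x_i$ left-multiplied with a product of at most two already-processed generators, so, since every generator is invertible ($x_i^2=k_i=\pm1$, hence $x_i^{-1}=k_i x_i$), the substitution is invertible: each original $x_j$ is recovered as $\pm$ a product of the $z_i$.

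Next I would fix the qubit representation $\rho$. On the $j$-th pair I send $\gamma_j,\delta_j$ to the $2\times2$ matrices of Eq.~\eqref{eq:reps} associated with the signs $(\gamma_j^2,\delta_j^2)\in\{\pm1\}^2$; since over $\C$ those matrices span the full $2\times2$ matrix algebra, this identifies $[\gamma_j,\delta_j]$ with the one-qubit Pauli algebra $[X,Z]$. For the center I treat two cases. If the images of the $x_j$ are to remain independent, I represent $\beta_i$ by $Z$ (by $iZ$ when $b_i=-1$) on a fresh qubit tensor factor; this is a faithful two-dimensional representation of $[\beta_i]\simeq\C\oplus\C$, the added operators act on disjoint factors, commute with everything and square correctly, so the relations already realized on the $[\gamma,\delta]$-blocks are untouched and the resulting $\rho$ is a faithful representation of the whole QCA. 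If independence is not required, I instead send $\beta_i$ to the scalar $\pm\sqrt{b_i}$, that is identify $[\beta_i]$ with $[1]$, obtaining a smaller, non-faithful realization.

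Then I would transport everything back to the original variables. Writing $x_j=\pm z_{i_1}\cdots z_{i_t}$ and applying $\rho$, the image $\rho(x_j)$ becomes $\pm$ a product of Pauli strings, hence $\rho(x_j)=\omega_j P_j$ with $P_j$ a Pauli string and $\omega_j\in\{\pm1,\pm i\}$. Because the change of generators $x\mapsto z$ is an algebra isomorphism and $\rho$ a homomorphism, their composition is a homomorphism of the QCA, so the sign governing the (anti-)commutation of $\rho(x_j)$ with $\rho(x_{j'})$ equals $(-1)^{\chi_{jj'}}$; and since $\omega_j$ is a central scalar, replacing each $\rho(x_j)$ by the Hermitian Pauli string $P_j$ preserves all of these relations, at the possible cost of altering $x_j^2$, which is precisely the price of insisting on Hermitian observables. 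This is the claimed mapping.

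I expect the only delicate point to be the center: a single one-dimensional irreducible representation collapses $[\beta_i]\simeq\C\oplus\C$, so it cannot keep the generators independent, and the independence-preserving version therefore has to genuinely enlarge the representation space by one qubit per central generator; one then checks that this enlargement does not perturb the $[\gamma,\delta]$-blocks, which holds because they live on disjoint tensor factors. Everything else — composing and inverting the splitting substitutions, and collecting the $\pm1,\pm i$ phases so that the final observables come out Hermitian — is routine bookkeeping.
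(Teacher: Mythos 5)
Your proposal is correct and follows essentially the same route the paper takes (the paper gives no separate proof of this corollary, relying on the pentagon example and the proof of the Pauli-group version): run the splitting algorithm, represent each $[\gamma,\delta]$-block by the $2\times 2$ matrices of Eq.~\eqref{eq:reps} and each central $[\beta]$ by $Z$ on a fresh factor (or by a scalar), invert the substitution using $x_i^{-1}=k_i x_i$, and strip the resulting $\pm i$ phases, which preserves the (anti-)commutation relations at the cost of the squares. Your explicit treatment of the invertibility of the change of generators and of why phase removal only affects $x_j^2$ is the same bookkeeping the paper performs implicitly.
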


\section{Pauli to Pauli mappings}
More generally, this can also be used for a structural decomposition of Pauli groups,
providing a {\em Pauli-to-Pauli} mapping.
The key idea is to perform the splitting algorithm
but using the generators of the given Pauli group.
After the algorithm finishes,
the $j$'th connected pair of vertices is, up to a complex phase,
identified with logical
$X_j$ and $Z_j$ operators.
Isolated vertices are identified, up to a complex phase,
with a logical $Z_\ell$ on a free coordinate,
with the sign determined by the group relations.
This way, the resulting algebra satisfies the original group relations,
but is presented in a Wedderburn decomposition of the underlying QCA.

\begin{corollary}[Splitting algorithm for Pauli groups; see also~\cite{PhysRevA.79.062322} for quantum codes]\label{cor:split_pauli}
Let $G = \langle x_1, \dots, x_m \rangle \in \PP_m$ form a Pauli group.
Then the algorithm in Observation~\ref{prop:split} provides a Wedderburn decomposition for $G$.
The center is formed by
by classical binary systems $[Z_\ell]$ with $\ell = 1, \dots, r$,
while each pair of connected vertices forms a one-qubit Pauli algebra $[X_j,Z_j]$ with $j=r+1,\dots,r+s$.
\end{corollary}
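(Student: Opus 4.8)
The plan is to reduce the statement to the structure theory already in place --- Theorem~\ref{thm:wedder} together with the splitting algorithm of Observation~\ref{prop:split} --- and then to match the resulting tensor factors with single qubits and classical bits using the explicit representations in Eq.~\eqref{eq:reps}. To begin, a group generated by $m$ elements of a Pauli group presents a special quasi-Clifford algebra: any two elements of a Pauli group commute or anticommute, which fixes the $\chi_{ij}$, and each generator $x_i$ is a Pauli string up to a complex phase, hence squares to $\pm\one$, so $k_i \in \{1,-1\}$. The algebra $A = [x_1, \dots, x_m]$ is finite-dimensional and, being generated by scalar multiples of Hermitian matrices, semisimple, so it has a Wedderburn decomposition.

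Next I would run the splitting algorithm on $x_1, \dots, x_m$. By Observation~\ref{prop:split} the output $z_1, \dots, z_m$ generates the same algebra $A$; each $z_i$ is a product of Pauli strings, hence again a Pauli string up to a phase in $\{1, i, -1, -i\}$, so $z_i^2 = k_i \in \{1,-1\}$; and its anti-commutativity graph is now $r$ isolated vertices together with $s$ disjoint edges, $r+2s=m$. (Here I would also verify that the four cases of Eq.~\eqref{eq:split_step} are exhaustive and that the induced relations remain of quasi-Clifford type, which is what Observation~\ref{prop:split} and Appendix~\ref{app:rel_after_split} provide.) Relabel so that $z_1, \dots, z_r$ are the isolated vertices and $(z_{r+1}, z_{r+2}), \dots, (z_{m-1}, z_m)$ the connected pairs. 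Since generators from different blocks now commute, the multiplication map
\[
[z_1] \otimes \cdots \otimes [z_r] \otimes [z_{r+1}, z_{r+2}] \otimes \cdots \otimes [z_{m-1}, z_m] \longrightarrow A
\]
is a surjective algebra homomorphism; by Theorem~\ref{thm:wedder}, applied to the quasi-Clifford algebra presented by $z_1, \dots, z_m$, it is an isomorphism when the central generators $z_\ell$ are linearly independent, and in general becomes one after quotienting the classical tensor factor by the relations the $z_\ell$ satisfy in $A$.

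It remains to bring each factor into Pauli form. By the discussion following Eq.~\eqref{eq:reps}, for any $c_j, d_j \in \{1,-1\}$ the generators of $\Q_{c_j, d_j} = [z_{r+2j-1}, z_{r+2j}]$ span $M_2(\C)$, and a suitable representation identifies $z_{r+2j-1}$ and $z_{r+2j}$ with $X$ and $Z$ up to a scalar in $\{1, i\}$; hence the $j$-th pair is identified, up to complex phases, with the one-qubit Pauli operators $X_{r+j}, Z_{r+j}$ on a fresh coordinate. An isolated vertex $z_\ell$ with $z_\ell^2 = b_\ell \in \{1,-1\}$ generates a commutative semisimple algebra $[z_\ell] \simeq \C \oplus \C$, whose two one-dimensional representations $z_\ell \to \pm\sqrt{b_\ell}$ are carried by the diagonal matrix $\sqrt{b_\ell}\, Z_\ell$, so $[z_\ell]$ is identified, up to a complex phase, with $[Z_\ell]$, with the sign fixed by the group relation that expresses $z_\ell$ through the $x_i$. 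Dropping the $\pm i$ phases to make every operator Hermitian changes no commutation relation and leaves the $X_{r+j}, Z_{r+j}, Z_\ell$ untouched. Composing, $A$ is identified with $[Z_1] \otimes \cdots \otimes [Z_r] \otimes [X_{r+1}, Z_{r+1}] \otimes \cdots \otimes [X_{r+s}, Z_{r+s}]$; its center is the diagonal subalgebra $[Z_1] \otimes \cdots \otimes [Z_r]$ of dimension $2^r$, and its block decomposition $\bigoplus_{\varepsilon \in \{\pm 1\}^r} M_{2^s}(\C)$ --- one block per joint eigenvalue of the central $Z_\ell$ --- is exactly the Wedderburn decomposition. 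Inverting the substitutions re-expresses the original $x_i$ as products of these Pauli operators, so they obey the original group law but now in block-diagonal form.

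The step I expect to demand the most care is the interaction with linear (in)dependence of the generators, already flagged for the quasi-Clifford to qubit mapping: if the $x_i$ obey extra relations the surjection above is not injective, an isolated vertex can act as a genuine scalar $[1]$ rather than as $[Z_\ell]$, and the true Wedderburn decomposition then has fewer than $2^r$ blocks. This is handled exactly as for the qubit mapping --- realizing each $[\beta]$-algebra on its own tensor factor when independence must be preserved, and omitting it otherwise --- so that $r+2s=m$ and the center $[Z_1] \otimes \cdots \otimes [Z_r]$ hold on the nose in the independence-preserving version. The rest is bookkeeping: tracking the accumulated $\pm i$ phases through the sequence of splits carefully enough to read off the correct signs on the central generators from the original group relations.
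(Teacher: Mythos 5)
Your proposal is correct and follows essentially the same route as the paper's proof: run the splitting algorithm of Observation~\ref{prop:split}, note that the generators at each stage span the same group/algebra, and then identify each isolated vertex with $i^{(1-k)/2}Z$ and each connected pair with $i^{(1-k)/2}X$, $i^{(1-k)/2}Z$ on a fresh tensor factor, inverting the substitutions at the end. Your treatment is somewhat more explicit than the paper's about semisimplicity, the possible failure of injectivity when the central generators satisfy extra relations, and the resulting block structure $\bigoplus_{\varepsilon} M_{2^s}(\C)$, but these are elaborations of the same argument rather than a different approach.
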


\begin{proof}
 Note that at every iteration,
 the $y_1, \dots, y_m$ generate the same group as $x_1, \dots, x_m$.
 Let $z_1, \dots, z_m$ be the generators at the end of the algorithm.
 Then, by Observation~\ref{prop:split}, every isolated vertex forms a $[\beta]$-algebra,
 while every pair of vertices connected by an edge forms a $[\gamma, \delta]$-algebra.
 Over $\C$ these can be represented by single-qubit Pauli algebras [c.f. Eq.~\eqref{eq:reps}].
 It suffices to fix
 each isolated vertex $[\beta]$ and
 each pair of vertices $[\gamma, \delta]$
 separately:
Suppose $z_u$ forms a $[\beta]$-algebra with $z_u^2 = k_u$,
and $[z_v,z_w]$ forms a $[\gamma, \delta]$-algebra with
$z_w^2 = k_w$,  where $k_u,k_v,k_w \in \{-1,+1\}$.
Then identify the monomial $z_u$ by
\begin{align} R:
 z_u \quad &\mapsto \quad i^{{(1-k_u)}/2} Z\,.
 \end{align}
Identify the monomials $z_v$ and $z_w$ by
 \begin{align} S :
 z_v \quad &\mapsto \quad i^{{(1-k_v)}/2} X\,, \nn\\
 z_w \quad &\mapsto \quad i^{{(1-k_w)}/2} Z\,.
 \end{align}
Then $R$ and $S$ form faithful group representations
for the subgroups $\langle z_u \rangle$  and $\langle z_v, z_w\rangle$ respectively,
A representation of the Pauli group $\langle x_1, \dots, x_m\rangle$
is then given by inverting the relations for $z_1, \dots, z_m$
and the tensor product representation
$\phi = \bigotimes_{i=1}^r R_i \bigotimes_{j=r+1}^{r+s} S_j$.
\end{proof}

\section{Recovering Jordan-Wigner}
In order to set quasi-Clifford to qubit mappings into context,
let us consider the Jordan-Wigner transformation.
A fermionic system in second quantization is described by
$N$ creation $\{a_i^\dag\}_{0=1}^{N-1}$ and $N$ annihilation $\{a_i\}_{i=0}^{N-1}$ operators.
These satisfy $\{a_i, a_j\} = \{a_i^\dag, a_j^\dag\} = 0$ and $\{a_i^\dag, a_j\} = \delta_{ij}\one$.
A fermionic observable is then a linear combination of square-free monomials in the $a_j^\dag$ and $a_j$.

Another useful basis for fermionic systems is that of $2N$ Majorana operators,
\begin{align}
m_{2j} &= \phantom{i}(a_j + a_j^\dag)\,,\quad\quad
m_{2j+1} = i(a_j^\dag - a_j)\,.
\end{align}
These satisfy the anti-commutation relations
$\{m_k, m_\ell\} = 2\delta_{k\ell}\one$,
so that the corresponding anti-commutativity graph is fully connected.
The Jordan-Wigner transformation establishes the following
mapping of Majorana operators to Pauli strings~\footnote{
The convention is chosen so to coincide with $X$ and $Z$ as single-qubit generators.},
\begin{align}\label{eq:majorana}
 m_{2j}   & \quad\mapsto\quad \tilde m_{2j} = X_{j} \prod_{k=0}^{j-1} Y_k\,,  \quad\quad\quad
 m_{2j+1} \quad\mapsto\quad \tilde m_{2j+1} = Z_{j} \prod_{k=0}^{j-1} Y_k\,,
\end{align}
for $j=0,\dots, N-1$.
In particular, the $\tilde m_i$ have the same anti-commutativity relations as the $m_i$,
that is, they all mutually anti-commute. In addition,
the $\tilde m_i$ have acquired a group structure originating from the Pauli relations.

The Jordan-Wigner transformation
is a special case of the algorithm in Cor.~\ref{cor:split_pauli}.

\begin{proposition}
Corollary~\ref{cor:split_pauli} gives the Jordan-Wigner transformation
for an anti-commuting set of even size.
\end{proposition}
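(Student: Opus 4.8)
The plan is to run the splitting algorithm of Observation~\ref{prop:split} on the $2N$ Majorana generators, always splitting the first available edge (so the pairs $\{1,2\},\{3,4\},\dots,\{2N-1,2N\}$ in order), track the resulting monomials, and then apply the identification of Corollary~\ref{cor:split_pauli}. Normalise so that $x_i := m_{i-1}$ satisfy $x_i^2=1$ and mutually anti-commute; the frustration graph is then the complete graph $K_{2N}$. A one-line computation with the case-4 substitution $y_i\mapsto x_1x_2x_i$ shows that splitting an edge of $K_{2N}$ leaves the remaining generators still pairwise anti-commuting, so the graph becomes $\{1,2\}\sqcup K_{2N-2}$; iterating, the algorithm never isolates a vertex and terminates with $N$ disjoint edges, i.e. $r=0$ and $s=N$ (consistent with Theorem~\ref{thm:wedder}, the algebra being simple of dimension $2^{2N}$). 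Because the number of generators is even there is no residual $[\beta]$-factor, so the target is exactly an $N$-qubit Pauli algebra.

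Next I would establish, by induction on the round index $j\in\{1,\dots,N\}$, that at the start of round $j$ the active generators are $w_i=\epsilon_j\,(x_1x_2\cdots x_{2j-2})\,x_i$ for $i\ge 2j-1$, where $\epsilon_j\in\{\pm1\}$ with $\epsilon_1=1$, and that $w_i^2=(-1)^{j-1}$. The inductive step uses only the case-4 rule $w_i\mapsto w_{2j-1}w_{2j}w_i$ together with the elementary identity $(x_1\cdots x_{2j-2})^2=(-1)^{\binom{2j-2}{2}}=(-1)^{j-1}$. This pins down the final generators produced in round $j$ as $z_{2j-1}=\epsilon_j\,P_{j-1}\,x_{2j-1}$ and $z_{2j}=\epsilon_j\,P_{j-1}\,x_{2j}$, where $P_{j-1}:=x_1x_2\cdots x_{2j-2}$ and $z_{2j-1}^2=z_{2j}^2=(-1)^{j-1}$; in particular $z_{2l-1}z_{2l}=(-1)^{l-1}x_{2l-1}x_{2l}$ for every block $l$, and distinct block products commute.

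Now apply Corollary~\ref{cor:split_pauli}: the $j$-th pair $(z_{2j-1},z_{2j})$ is sent to $(i^{(1-(-1)^{j-1})/2}X,\,i^{(1-(-1)^{j-1})/2}Z)$ on the $j$-th tensor factor and to the identity on the others, so $z_{2l-1}z_{2l}$, and hence $x_{2l-1}x_{2l}$, maps to a phase times a single-qubit $Y$ on factor $l$. Therefore $P_{j-1}=\prod_{l=1}^{j-1}x_{2l-1}x_{2l}$ maps to a phase times the string of $Y$'s on factors $1,\dots,j-1$ (no reordering signs, since the block products commute and $Y$'s on distinct factors commute). Inverting $x_{2j-1}=\epsilon_j\,P_{j-1}^{-1}\,z_{2j-1}$ and $x_{2j}=\epsilon_j\,P_{j-1}^{-1}\,z_{2j}$, using $Y^{-1}=Y$, gives that $x_{2j-1}=m_{2(j-1)}$ maps to a scalar times $X_{(j)}\,Y_{(1)}\cdots Y_{(j-1)}$ and $x_{2j}=m_{2(j-1)+1}$ maps to a scalar times $Z_{(j)}\,Y_{(1)}\cdots Y_{(j-1)}$, where the subscript $(l)$ denotes the $l$-th tensor factor. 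Identifying tensor factor $l$ with qubit $l-1$, these are the Jordan-Wigner strings of Eq.~\eqref{eq:majorana}: the image of $m_{2(j-1)}$ squares to the identity and equals a scalar times the Hermitian Pauli string $X_{(j)}Y_{(1)}\cdots Y_{(j-1)}$, so that scalar is $\pm1$, and the residual signs are absorbed by the sign freedom in the two-dimensional representations of Eq.~\eqref{eq:reps} (the choice $\gamma\mapsto\pm X$), recovering Eq.~\eqref{eq:majorana} exactly.

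The main obstacle is purely the bookkeeping in the inductive step: carrying the monomials $P_{j-1}x_i$ and their $\pm$ signs correctly through the iterated substitution, and spotting that the per-block products $x_{2l-1}x_{2l}$ collapse onto single-qubit $Y$ operators -- this is exactly what produces the Jordan-Wigner ``tail'' $\prod_k Y_k$. Once that structural observation is in hand the remaining identifications are routine.
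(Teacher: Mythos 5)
Your proposal is correct and takes essentially the same route as the paper's proof: split the complete anti-commutativity graph pairwise in order, observe via the case-4 rule that the residual graph stays complete, arrive at the same Wedderburn basis $z_{2j-1}=P_{j-1}x_{2j-1}$, $z_{2j}=P_{j-1}x_{2j}$ as in Eq.~\eqref{eq:rec_JW}, and invert to recover the Jordan--Wigner strings. You are somewhat more careful than the paper about tracking the signs $(-1)^{j-1}$ and the phases in the single-qubit representations, but the underlying argument is identical.
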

\begin{proof}
Let $A = \{x_1, \dots, x_m\}$ a set of pairwise anti-commutating hermitian unitary operators.
Then its corresponding anti-commutativity graph is fully connected.
For every pair of vertices that is split from the graph, according to Eq.~\eqref{eq:stay}
the remaining graph remains fully connected with its vertices transformed
by $y_i \mapsto y_k y_\ell y_i$ where $y_k, y_\ell$ is the pair that is being split.
At the end, only pairs of vertices are left, in addition a single vertex if $|A|$ is odd.
The Wedderburn reduced basis reads
\begin{align}\label{eq:rec_JW}
z_1 &= x_1\,,\nonumber\\
z_2 &= x_2\,,\nonumber\\
z_i &= \begin{cases}
              \big(\prod_{k=1}^{i-1} x_k \big) x_i \quad\quad \text{$i$ odd}\,,  \\
              \big(\prod_{k=1}^{i-2} x_k \big) x_i  \quad\quad \text{$i$ even}\,.
           \end{cases}
 \end{align}
 Inverting the relations yields, up to a sign,
\begin{align}\label{eq:rec_JW_inv}
 x_1 &= z_1\,,\nonumber \\
 x_2 &= z_2\,,\nonumber \\
 x_i &= \begin{cases}
              \big(\prod_{k=1}^{i-1} z_k \big) z_i \quad\quad \text{$i$ odd}\,,  \\
              \big(\prod_{k=1}^{i-2} z_k \big) z_i \quad\quad \text{$i$ even}\,.
           \end{cases}
 \end{align}
Mapping to qubits with $x_{2j} \mapsto X_j$, $x_{2j+1} \mapsto Z_j$ gives
the Jordan-Wigner transformation [Eq.~\eqref{eq:majorana}].
\end{proof}

\section{Symmetry-reducing semidefinite programs}
Many semidefinite programs in quantum information deal with problems on multi-qubit systems, and often these SDPs are invariant under some set of operations. Because the mapping of Corollary~\ref{cor:split_pauli} preserves the ${}^\dag$-operation, it also preserves the property of a matrix being positive-semidefinite.
This allows to symmetry-reduce positive semidefinite operators~\cite{Bachoc2012}.

To see how this works,
consider finding the ground state of a Hamiltonian defined on a graph $G =(V,E)$,
\begin{equation}\label{eq:Hamiltonian}
H = - \sum_{\{i,j\} \in E} (XZ + ZX)_{ij}\,.
\end{equation}
A ground state can be found by solving the following semidefinite program,
\begin{align}\label{eq:Hamiltonian_SDP}
 \text{minimize}_{\varrho} \quad  &\tr(H \varrho)\quad\quad  \text{subject to} \quad\quad \varrho \succeq 0 \quad\text{and} \quad \tr(\varrho) = 1\,.
\end{align}
One way to approximate the ground state of large systems is to
relax the optimization of Eq.~\eqref{eq:Hamiltonian_SDP}
to that over locally consistent marginals,
\begin{align}\label{eq:Hamiltonian_relax}
 \text{minimize} \quad  &\tr(H \varrho_{ijk})           \nn\\
 \text{subject to} \quad& \varrho_{ijk} \succeq 0  \nn\\
                   & \tr(\varrho_{ijk}) = 1   \nn\\
                  & \tr_k(\varrho_{ijk}) = \tr_\ell (\varrho_{ij\ell})\,,
\end{align}
where the optimization is over all distinct $i,j,k,\ell \in \{1,\dots, n\}$.

The optimization in Eq.~\eqref{eq:Hamiltonian_relax} is over a set of locally consistent three-qubit marginals, and thus over $\binom{n}{3}$ complex matrices of size $8\times 8$.
The smallest algebra containing the Pauli terms of a single three-qubit reduced Hamiltonian is
\begin{equation}
 S = [XZI, ZXI, XIZ, ZIX, IXZ, IZX]\,.
\end{equation}
Note that the last interaction term $IZX$ is not independent.
Using Corollary~\ref{cor:split_pauli} one can find a $\star$-isomorphism:
\begin{align}
 \varphi : XZI \quad&\mapsto\quad IXI\,,
& XIZ \quad&\mapsto\quad IIZ\,,
& IXZ \quad&\mapsto\quad \phantom{-}ZXZ \nn\\
 ZXI \quad&\mapsto\quad IIX\,,
& ZIX \quad&\mapsto\quad IZI\,,
& IZX \quad&\mapsto\quad -ZXZ\,.
\end{align}
Because this mapping preserves the ${}^\dag$-operation,
it also preserves the property of a matrix being positive-semidefinite~\cite{Bachoc2012}.
Now note that after applying $\varphi$, the first coordinate contains only Pauli I and Z terms, yielding a block-diagonalization of $\varrho_{ijk}$ into two $4 \times 4$ matrix blocks.

\section{Maximal anti-commuting sets}
Given a Pauli group, how can one find a maximal subset in which every pair of operators anti-commute?
We recall that any maximal anti-commuting subset must be of odd size:
\begin{lemma}
 Let $G$ be a Pauli group.
 If $A$ is a maximal set of anti-commuting Pauli elements in $G$, then $|A|$ is odd.
\end{lemma}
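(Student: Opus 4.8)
The plan is to show that a maximal anti-commuting set $A = \{g_1, \dots, g_m\}$ cannot have even size by exhibiting, when $m$ is even, an extra Pauli operator that anti-commutes with every $g_i$ — contradicting maximality. The natural candidate is the product $g_1 g_2 \cdots g_m$ (up to a scalar to make it a legitimate Pauli string), and the task reduces to a parity count of how many times this product anti-commutes past a given $g_i$.

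First I would record the basic sign-exchange rule in the Pauli group: for Pauli strings $g, h$ one has $gh = (-1)^{\langle g, h\rangle} hg$, where $\langle g,h\rangle$ is the symplectic inner product, and in our anti-commuting set $\langle g_i, g_j\rangle = 1$ for all $i \neq j$ while $\langle g_i, g_i\rangle = 0$. Set $P = g_1 g_2 \cdots g_m$; since each $g_i^2 = \pm \one$, $P$ is a scalar multiple of a Pauli string (possibly $\pm i$ times one), but the scalar is irrelevant for commutation relations, so I may treat $P$ as an element of $\PP_n$ for the purpose of checking anti-commutativity. Now I would compute $P g_k$ versus $g_k P$ by moving $g_k$ leftward through $P$: commuting $g_k$ past $g_i$ for each $i \neq k$ contributes a factor $(-1)$ (since $i \sim k$), and past $g_k$ itself contributes $(-1)^0 = 1$. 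There are $m-1$ indices $i \neq k$, so $P g_k = (-1)^{m-1} g_k P$. When $m$ is even, $m - 1$ is odd, hence $P$ anti-commutes with every $g_k$.

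The remaining step is to argue that $P$ (rescaled to a genuine Pauli string $P'$) is not already in $A$ and is a nontrivial Pauli element, so adjoining it genuinely enlarges $A$. If $P' = \pm g_j$ for some $j$, then $g_j$ would anti-commute with itself, which is impossible; and $P'$ cannot be $\pm \one$ since $\one$ commutes with everything whereas we just showed $P'$ anti-commutes with $g_1$ (using $m \geq 2$; the degenerate cases $m = 0$ and trivialities are handled separately, noting a maximal anti-commuting set is certainly nonempty). Therefore $A \cup \{P'\}$ is a strictly larger anti-commuting set, contradicting maximality, so $|A| = m$ must be odd.

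The only real subtlety — and the step I would be most careful about — is the bookkeeping in the sign computation $P g_k = (-1)^{m-1} g_k P$: one must move $g_k$ past exactly the $m-1$ factors other than the copy of $g_k$ sitting in position $k$, and each such move flips the sign exactly once because the set is \emph{pairwise} anti-commuting. This is routine but it is where an off-by-one in the parity would break the argument; everything else (that $P$ rescales to a Pauli string, that it is nontrivial and distinct from the $g_i$) is immediate from the group structure.
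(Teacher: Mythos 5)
Your proposal is correct and follows essentially the same route as the paper: form the product $P = g_1\cdots g_m$, observe that for even $m$ it anti-commutes with every $g_k$ (via the parity count $Pg_k = (-1)^{m-1}g_kP$), and derive a contradiction with maximality. You spell out the sign bookkeeping and the check that $P$ is a genuine new element (not $\pm g_j$ or a scalar), which the paper leaves implicit, but the underlying argument is identical.
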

\begin{proof}
Let $A = \{x_1, \dots, x_m\}$.
Suppose $a$ is a maximal anti-commuting set with $|A|$.
Then $x_1 \cdots x_m \in G$ anti-commutes with all $x_i$
and an additional element of $G$ can be added to $A$,
yielding a contradiction.
Thus $|A|$ must be odd.
\end{proof}

In particular, given a anti-commuting set $\{ g_1, \dots, g_m \}$ with $m$ even,
it can be checked that the only element in the group $\langle g_1, \dots, g_m\rangle$
that anti-commutes with all of them is the monomial $g_1 g_2\cdots g_m$.
If $m$ is odd, no additional element can be found.

\begin{observation}\label{eq:construct_maxacommset}
Given a Pauli group, the following algorithm constructs a maximally anti-commuting subset.
 \begin{enumerate}
 \item
 Use the splitting algorithm to decompose the group into a tensor product of $[\beta]$- and $[\gamma,\delta]$-algebras,
 \begin{equation*}
  [g_1, \dots ,g_m ] \simeq  [\beta_1] \ot \dots \ot [\beta_r] \ot \dots \ot [\gamma_1,\delta_1] \ot \dots \ot [\gamma_s, \delta_s]\,,
 \end{equation*}
 where $r+2s = m$.
 \item Map every $[\gamma,\delta]$-algebra to a $[X,Z]$-algebra.
 \item To the pairs of vertices apply the inverse Jordan-Wigner transformation [Eq.~\eqref{eq:rec_JW}], obtaining an anti-commuting set $\{x_1, \dots,x_s, x_{s+1}, \dots, x_{2s}\}$.

 \item Complete the anti-commuting set with the only monomial that anti-commutes with all $x_i$,
 that is with $x_{2s+1} = x_1 \cdots x_s x_{s+1} \cdots x_{2s}$.
 The resulting maximal anticommuting set has size $2s + 1$.
 \end{enumerate}

\end{observation}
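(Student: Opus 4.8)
The plan is to establish two complementary bounds on the size of an anti-commuting subset of $G=\langle g_1,\dots,g_m\rangle$: the algorithm exhibits one of size $2s+1$, and nothing larger exists, where $r+2s=m$ are the parameters of the Wedderburn decomposition of $G$ furnished by Corollary~\ref{cor:split_pauli}. First I would reduce to a single Wedderburn block. A central element of $G$ commutes with all of $G$, hence lies in no anti-commuting subset of size $\geq 2$; so (assuming $s\geq 1$) a largest anti-commuting subset avoids the $r$ isolated vertices entirely. Fixing one Wedderburn component gives a $\star$-homomorphism $\phi$ from the algebra $[g_1,\dots,g_m]$ onto $M_{2^s}(\C)$ which, by Eq.~\eqref{eq:reps}, identifies each pair $[\gamma_j,\delta_j]$ with a single-qubit Pauli algebra $[X_j,Z_j]$ and sends the central generators to scalars; thus $\phi(G)$ is, up to phases, a subgroup of $\PP_s$. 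Since every element of $G$ is a Pauli string, any two of them commute or anti-commute, and because a product of Pauli strings is never zero while $\phi$ is nonzero on each such product, $g,h\in G$ anti-commute if and only if $\phi(g),\phi(h)$ do; in particular anti-commuting elements of $G$ have distinct images under $\phi$.

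For the lower bound, note that under $\phi$ the $2s$ pair-generators become $X_1,Z_1,\dots,X_s,Z_s$ (up to phase). Running the proposition that identifies Corollary~\ref{cor:split_pauli} with the Jordan--Wigner transformation backwards, the inverse relations of Eq.~\eqref{eq:rec_JW}--\eqref{eq:rec_JW_inv} applied to these produce $2s$ pairwise anti-commuting Pauli strings $x_1,\dots,x_{2s}$; each is a monomial in the $X_j,Z_j$, so its $\phi$-preimage is the same monomial in the pair-generators $z_{r+1},\dots,z_{r+2s}$ and hence an element of $G$. Because $2s$ is even, the monomial $x_1\cdots x_{2s}$ anti-commutes with every $x_i$ (the computation in the preceding Lemma), and it too lies in $G$. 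This yields $2s+1$ pairwise anti-commuting elements of $G$.

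For the upper bound it suffices, by the reduction, to show that the largest anti-commuting subset of $\PP_s$ has size $2s+1$. Encode a Pauli string on $s$ qubits by its symplectic vector in $\F_2^{2s}$, so that anti-commutation corresponds to symplectic inner product $1$. If $v_0,\dots,v_k\in\F_2^{2s}$ pairwise have symplectic product $1$, then writing $V$ for the matrix with columns $v_i$ and $\Omega$ for the $2s\times 2s$ form, the Gram matrix $V^{\top}\Omega V$ equals $J+I$ over $\F_2$ with $J$ the all-ones $(k+1)\times(k+1)$ matrix; hence $\rank_{\F_2}(J+I)\leq\rank(V)\leq 2s$. A short calculation shows $\rank_{\F_2}(J+I)$ is $k$ when $k$ is even and $k+1$ when $k$ is odd, so in both cases $k+1\leq 2s+1$, with equality attainable only for $k$ even. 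Thus $2s+1$ is optimal, and the construction achieves it, producing a largest (hence in particular maximal) anti-commuting subset.

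The step I expect to be the main obstacle is the bookkeeping in the reduction: checking carefully that passing to a single Wedderburn block both preserves and reflects anti-commutativity -- which hinges on all the operators in sight being Pauli strings, so that they pairwise $\pm$-commute and their products are nonzero -- and that the inverse--Jordan--Wigner output, assembled from the $X_j,Z_j$, genuinely pulls back into $G$. Once this is set up, the lower bound is the algorithm itself and the upper bound is the standard symplectic rank count.
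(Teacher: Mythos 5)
Your proof is correct, but for the maximality (upper-bound) half you take a genuinely different route from the paper. The paper's argument is purely structural and very short: any anti-commuting set of size $2s'+1$ generates an algebra isomorphic to $[\beta]\ot[\gamma_1,\delta_1]\ot\cdots\ot[\gamma_{s'},\delta_{s'}]$ (by running the splitting algorithm on a complete graph), and since $G$ itself decomposes with only $s$ factors of type $[\gamma,\delta]$, one cannot have $s'>s$. You instead first pass to a single Wedderburn block via the $\star$-homomorphism $\phi$ onto $M_{2^s}(\C)$ --- carefully checking that $\phi$ both preserves and reflects anti-commutation because group elements are Pauli strings with invertible images --- and then bound the size of an anti-commuting subset of $\PP_s$ by the $\F_2$-rank of the symplectic Gram matrix $J+I$, getting $k+1\le 2s+1$ with equality only for even $k$. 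Your rank computation and the parity check for the completing monomial $x_1\cdots x_{2s}$ are both correct. What each approach buys: the paper's argument is shorter and stays entirely inside the QCA formalism, but it leaves implicit why a subalgebra of $[g_1,\dots,g_m]$ cannot contain more $[\gamma,\delta]$ factors than the ambient algebra (a naive dimension count alone does not close this when $r$ is large); your symplectic argument is more elementary and self-contained, makes the invariant explicit (the rank of the commutation form modulo the center), and as a by-product re-derives the earlier lemma that largest anti-commuting sets have odd size. The lower-bound half --- exhibiting the $2s+1$ elements via the inverse Jordan--Wigner relations and the product monomial --- coincides with the paper's construction.
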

\begin{proof}
The elements of the set anti-commute by construction.
To see that the set is maximal, note that the algebra generated by any anti-commuting subset of size $2s+1$
is isomorphic to $[\beta] \ot [\gamma_1, \delta_1] \ot \dots \ot [\gamma_s, \delta_s] $,
which is realized by the inverse Jordan-Wigner transformation [Eq.~\eqref{eq:rec_JW}].
Thus if $G=[g_1, \dots, g_m]$ decomposes into $s$ $[\gamma,\delta]$ algebras, the largest anti-commuting subset of $G$ has size $2s+1$, which is realized by Observation~\ref{eq:construct_maxacommset}.
\end{proof}

\begin{example}
 Consider the Pauli group $G = \langle XXI, XIX, ZZI, ZIZ\rangle$.
 Corollary~\ref{cor:split_pauli} yields the $\ast$-isomorphism
 \begin{equation}\label{eq:map_to_real2qb}
\phi: \quad
III \mapsto  \phantom{-} II\,, \quad
XXI \mapsto  \phantom{-} XI\,,  \quad XIX \mapsto \phantom{-} IX\,, \quad
ZZI \mapsto  \phantom{-} IZ\,,  \quad ZIZ \mapsto \phantom{-} ZI\,.
\end{equation}
Following the procedure in Observation~\ref{eq:construct_maxacommset} we apply the inverse Jordan-Wigner transformation. Ignoring phases, a maximal anti-commuting subset of $G$ is
 $\{x_1 = XXI\,,
 x_2 = ZIZ\,,
 x_3 = ZXY\,,
 x_4 = XYZ \,,
 x_5 = IYY\}$,
where $x_5 = x_1x_2x_3x_4$.
\end{example}

\bibliographystyle{amsalpha}
\bibliography{current_bib}

\appendix

\section{Relations after splitting}
\label{app:rel_after_split}

More precisely, the relations after splitting $(x_1,x_2)$ are
(see Fig.~\ref{fig:split_situations} A, B):
the relations remain,
\begin{align}\label{eq:stay}
 y_i \sim y_j \text{\quad iff \quad} x_i \sim x_j\quad
 \begin{cases}
& \text{if } x_i \not\sim x_1 \text{\quad and\quad} x_i \not\sim x_2 \\
& \text{if } x_j \not\sim x_1 \text{\quad and\quad} x_j\not\sim x_2\\
& \text{if } x_i \sim x_1 \text{\quad and\quad} x_j \sim x_1
  \text{\quad and\quad} x_i \sim x_2 \text{\quad and\quad} x_j \sim x_2
  \end{cases}
\end{align}
The relations invert
(see Fig.~\ref{fig:split_situations} C, D):
\begin{align}\label{eq:invert}
 y_i \sim y_j \text{\quad iff\quad} x_i \not\sim x_j\quad
 \begin{cases}
& \text{if }
x_i \not\sim x_1 \text{\quad and\quad}
x_i     \sim x_2 \text{\quad and\quad}
x_j     \sim x_1 \text{\quad and\quad}
x_j\sim x_2 \\
& \text{if }
x_i     \sim x_1 \text{\quad and\quad}
x_i     \sim x_2 \text{\quad and\quad}
x_j \not\sim x_1 \text{\quad and\quad}
x_j\sim x_2 \\
& \text{if }
x_i     \sim x_1 \text{\quad and\quad}
x_i \not\sim x_2 \text{\quad and\quad}
x_j     \sim x_1 \text{\quad and\quad}
x_j     \sim x_2 \\
& \text{if }
x_i     \sim x_1 \text{\quad and\quad}
x_i     \sim x_2 \text{\quad and\quad}
x_j     \sim x_1 \text{\quad and\quad}
x_j \not\sim x_2 \\
& \text{if }
x_i     \sim x_1 \text{\quad and\quad}
x_i \not\sim x_2 \text{\quad and\quad}
x_j \not\sim x_1 \text{\quad and\quad}
x_j     \sim x_2 \\
& \text{if }
x_i \not\sim x_1 \text{\quad and\quad}
x_i     \sim x_2 \text{\quad and\quad}
x_j     \sim x_1 \text{\quad and\quad}
x_j \not\sim x_2
\end{cases}
\end{align}

\end{document}